\newcommand{\bra}[1]{\left\langle{#1}\right\vert}
\newcommand{\ket}[1]{\left\vert{#1}\right\rangle}
\newcommand{\qw}[1][-1]{\ar @{-} [0,#1]}
\newcommand{\qwx}[1][-1]{\ar @{-} [#1,0]}
\newcommand{\gate}[1]{*{\xy *+<.6em>{#1};p\save+LU;+RU **\dir{-}\restore\save+RU;+RD **\dir{-}\restore\save+RD;+LD **\dir{-}\restore\POS+LD;+LU **\dir{-}\endxy} \qw}
\newcommand{\control}{*!<0em,.025em>-=-{\bullet}}
\newcommand{\ctrl}[1]{\control \qwx[#1] \qw}
\newcommand{\lstick}[1]{*!R!<.5em,0em>=<0em>{#1}}
\newcommand{\Qcircuit}[1][0em]{\xymatrix @*[o] @*=<#1>}
\newcommand{\ket}[1]{\left | \, #1 \right \rangle}
\newcommand{\bra}[1]{\left \langle #1 \, \right |}
	\renewcommand{\cite}[1]{[??]}
\newcommand{\norm}[1]{\left|\left|#1\right|\right|}
\newtheorem{definition}{Definition}
\newtheorem{theorem}{Theorem}
\newtheorem{lemma}{Lemma}
\title{Self-testing graph states}
\author{Matthew McKague \\ Centre for Quantum Technologies \\ National University of Singapore \\Ê\url{matthew.mckague@nus.edu.sg}}
\begin{document}
\maketitle
\begin{abstract}
We give a construction for a self-test for any connected graph state.  In other words, for each connected graph state we give a set of non-local correlations that can only be achieved (quantumly) by that particular graph state and certain local measurements.  The number of correlations considered is small, being linear in the number of vertices in the graph.  We also prove robustness for the test.

\end{abstract}

\section{Introduction}
Self-testing is a process where a skeptical classical user attempts to verify the operation of a collection of quantum devices without trusting any of them \emph{a priori}.  Importantly, we wish to make as few assumptions as possible about the operation of the devices and in particular we do not bound the dimension of the state space for each device.  However we do make the necessary assumption that the quantum devices are not allowed to communicate with each other.  Despite these severe restrictions on our knowledge it is possible to devise self-tests for a number of different situations.

Self-testing was first introduced by Mayers and Yao \cite{Mayers:2004:Self-testing-qu} who described a self-test for a maximally entangled pair of qubits (EPR pair) along with a small set of local measurements.  Meanwhile, self-testing of gates was introduced by van Dam et al. \cite{van-Dam:1999:Self-Testing-of} in the scenario of known Hilbert space dimensions.  These two results were extended to testing of circuits over a real Hilbert space by Magniez et al \cite{Magniez:2006:Self-testing-of}.  Most recently, McKague and Mosca \cite{McKague:2010:Generalized-sel} reproved the Mayers-Yao result and extended it to allow for testing of a larger set of measurements including measurements over the full complex Hilbert space.

In this paper we use proof techniques developed in \cite{McKague:2010:Generalized-sel} to define self-tests for the graph state for any connected graph.  This family of self-tests is efficient in the number of measurement settings, requiring only two or three measurement settings on each vertex, depending on the graph.  As well the total number of correlations tested is small, only one per vertex plus an additional 3 at most.  We also prove that the self-tests are robust.

\subsection{Graph states and notation}
A graph $G$ is composed of two sets:  a set $V$ of \emph{vertices}, and a set $E \subset V \times V$ of \emph{edges}.  For our purposes we suppose that $(v,v) \notin E$ and $(v,u) \in E$ whenever $(u,v) \in E$.  Two vertices $u,v$ are said to be \emph{adjacent} if $(u,v) \in E$.  A cycle is a sequence of vertices in which each vertex occurs at most once, each vertex in the sequence is adjacent to the next vertex in the sequence, and the last vertex is adjacent to the first.  A \emph{subgraph} $G^{\prime}$ of $G$ is a graph $(E^{\prime}, V^{\prime})$ with $E^{\prime} \subseteq E$, $V^{\prime} \subseteq V$.  An \emph{induced subgraph} is a subgraph in which $E^{\prime} = \{(u,v) \in E | u,v \in V^{\prime}\}$, so the subgraph contains all edges between vertices of $V^{\prime}$ in the original graph.  The \emph{neighbours} $N_{v}$ of a vertex $v$ are the vertices to which $v$ is connected with an edge, i.e. $N_{v} = \{ u \in V | (u,v) \in E\}$.  A \emph{bipartite} graph is a graph in which the set of vertices may be partitioned into two sets $S$ and $T$, each of which has no edges within it.  So the induced subgraphs on $S$ and $T$ have no edges.  An important property of bipartite graphs is that they are exactly the graphs which contain no cycles with an odd number of vertices.  A graph  is \emph{connected} if for each pair of vertices $u,v$ there is a sequence of adjacent vertices beginning with $u$ and ending in $v$.  For more detail regarding graph theory see Diestel \cite{Diestel:2010:Graph-Theory}.

A graph state consists of a set of qubits indexed by the set of vertices $V$, each prepared in the state $\ket{+}_{v} = \frac{1}{\sqrt{2}}\left(\ket{0}_{v} + \ket{1}_{v}\right)$, followed by $(\text{CTRL}-Z)_{uv}$ operations between pairs of qubits where the corresponding vertices $u,v$ in the graph are adjacent.  If the graph is not connected then the graph state will be a product state of graph states on the separate components.  Hence connected graphs form the interesting cases.

Graph states are also characterized by their stabilizer group.  Let the operators $X_{v}$ and $Z_{v}$ be the Pauli operators $X$ and $Z$ applied to qubit $v$, tensor product with $I$ on all other qubits.  If $P$ is a Pauli and $S \subseteq V$ then 
\begin{equation}
P^{S} = \prod_{v \in S} P_{v}.
\end{equation}
The stabilizer group for a graph state on the graph $G = (V, E)$ is generated by
\begin{equation} 
S_{v} = \left\{X_{v} Z^{N_{v}} | v \in V \right\}.
\end{equation}
That is, for each vertex $v$ there is a stabilizer operator with $X$ operating on $v$ and $Z$ operating on each of $v$'s neighbours.  Note that there are $n$ such operators, they pairwise commute and are independent.  Hence there is exactly one state with this stabilizer group.  That is to say, the graph state $\ket{\psi}$ is the unique state for which $S_{v}\ket{\psi} = \ket{\psi}$ for each $v \in V$.

As one additional piece of notation, we will frequently need to deal with products of stabilizers on a subset of vertices.  For this case we define
\begin{equation}
Z^{N(S)} = \prod_{v \in S} Z^{N_{v}}
\end{equation}
where the factor $Z_{v}$ appears if $v$ has an \emph{odd} number of neighbours in $S$.

\subsection{Self-testing definitions}
Consider the following \emph{black-box} scenario:  we are given a set of devices, each with a knob labeled with a number of settings, a pair of lights labeled $\pm 1$, and a button.  After we select a setting and push the button one of the lights turns on.  We are told that the devices jointly share a state which is measured, according to the knob setting, in a specified basis.  Our goal is to determine if the black-boxes are operating according to their specification using only the external controls of the boxes.  Additionally we may isolate the boxes to ensure that they do not communicate.

We begin with a \emph{reference experiment} consisting of an $n$-partite system in the state $\ket{\psi}$ together with local measurement observables $M_{j,m}$ on subsystem $j$ with measurement setting $m \in \{0,1, \dots, k_{j}\}$.  The measurement setting $m=0$ corresponds to no measurement, which we may represent with the identity.  The reference experiment represents the specification for how the black-boxes supposedly operate.  In particular, we assume that the state and observables are known.

In addition, we have a \emph{physical experiment} consisting of an $n$-partite physical system in the 
state\footnote{We consider only pure states, but since the Hilbert space of the physical system has unbounded dimension we may easily add a purification to mixed states.} 
$\ket{\psi^{\prime}}$ together with local measurement observables $M^{\prime}_{j,m}$ on subsystem $j$, with $m \in \{0,1, \dots, k_{j}\}$.  Again we may take $M^{\prime}_{j,0} = I$ indicating that we do not measure the subsystem.  We place no bound on the dimension of the Hilbert space of each subsystem, but assume that it is finite.  The physical experiment represents how the black-boxes \emph{actually} operate.

If a physical and reference experiment have the same number of subsystems and the same number of measurements on each subsystem, then we say that they are \emph{compatible}.  Note that we will always deal with the case of two-outcome measurements, so that all observables have eigenvalues $\pm 1$.  In principle, though, the definitions can be extended to other types of measurements.

To be more specific about our task, we introduce two notions, \emph{simulation} and \emph{equivalence}.

\begin{definition}\label{def:simulates}
Let a physical experiment and a compatible reference experiment be given as above.   We say that the physical experiment \emph{simulates} the reference experiment if for each measurement setting $m = (m_{1}, \dots, m_{n})$, $m_{j} \in \{0, \dots, k_{j}\}$ we have
\begin{equation}
\bra{\psi^{\prime}} \bigotimes_{j=1}^{n} M^{\prime}_{j, m_{j}} \ket{\psi^{\prime}} = \bra{\psi} \bigotimes_{j=1}^{n} M_{j, m_{j}} \ket{\psi}.
\end{equation}
\end{definition}

Here it will be sufficient to consider only a subset of possible measurement settings.  In this case we include the measurement settings of interest in our description of the reference experiment.

\begin{definition}\label{def:equivalence}
Let a physical experiment and a compatible reference experiment be given as above.  We say that the physical experiment is \emph{equivalent} to the reference experiment if there exists a local isometry
\begin{equation}
\Phi = \Phi_{1} \otimes \dots  \otimes \Phi_{n}
\end{equation}
and a state $\ket{junk}$ such that, for each $j$, and $m \in \{1, \dots, k_{j}\}$
\begin{eqnarray}
\Phi(\ket{\psi^{\prime}}) & = & \ket{junk}\otimes \ket{\psi}\\
\Phi(M^{\prime}_{j,m}\ket{\psi^{\prime}}) & = & \ket{junk} \otimes M_{j,m}\ket{\psi}
\end{eqnarray}
where $\ket{junk}$ is in the same Hilbert space as $\ket{\psi^{\prime}}$.
\end{definition}

When describing any physical system we must first fix a reference frame, and decide which components to describe and which to leave out.  Thus we may take a description and apply local changes of basis, or add ancillas and arrive at another, perfectly acceptable, description of the system.  These two operations are invisible from the perspective of classical interactions with devices so we can never rule them out.  This motivates our definition of equivalence, which takes such ambiguities in quantum descriptions into account.

Throughout the remainder of this paper we will used primed ($\ket{\psi^{\prime}}$, $X^{\prime}$, $S^{\prime}_{v}$ etc.) to denote physical measurements and states and unprimed for reference measurements and states.  Note that $S^{\prime}_{v} = X^{\prime}_{v} \otimes Z^{\prime N(v)}$ and other derived physical measurements are defined in terms of the local physical measurements.  Also, although we use the letters $X$ and $Z$ for the physical measurements, these need not be Pauli matrices, and we assume nothing about their structure other than what we mention explicitly.

\subsection{Main results}

A self-testing theorem specifies a particular reference experiment and states that if a physical experiment simulates the reference experiment, then it is equivalent to it.  That is to say, for a particular experiment \emph{simulation implies equivalence.}  Our main result is to show that this is the case for the following two reference experiments.

\begin{definition}[Reference experiment 1: connected graph with an odd induced cycle]
Let $G= (V,E)$ be a connected graph containing an odd induced cycle $C= (V^{\prime}, E^{\prime})$.  Let $\ket{\psi}$ be the corresponding graph state with stabilizers $S_{v}$,  $v \in V$.  The reference experiment consists of the state $\ket{\psi}$, the stabilizer measurements $S_{v}$ and the measurement $X^{V^{\prime}} Z^{N(V^{\prime})}$.
\end{definition}

It is easy to show that a graph which contains any odd cycle contains an induced cycle.  Thus reference experiment 1 is applicable to all connected non-bipartite graphs.

\begin{definition}[Reference experiment 2: connected graph]
Let $G=(V,E)$ be a connected graph with at least two vertices.  Let $\ket{\psi}$ be the corresponding graph state with stabilizers $S_{v}, v \in V$.  Choose a fixed edge $(u,v) \in E$ and define
\begin{equation}
D_{u} = \frac{1}{\sqrt{2}}\left(X_{u} + Z_{u}\right)
\end{equation}
The reference experiment consists of the state $\ket{\psi}$, the stabilizer measurements $S_{v}$ and the measurements
\begin{equation}
Z^{\prime}_{u} Z^{\prime N_{u}}
\end{equation}
\begin{equation}
D_{u} Z^{N_{u}}
\end{equation}
\begin{equation}
D_{u} X_{v} Z^{N_{v} \setminus \{u\}}
\end{equation}

\end{definition}

In appendix~\ref{sec:bipartiteneedsd} we show that the $D$ measurements are required since for a bipartite graph all measurements using $X$ and $Z$ alone can be simulated using a classical hidden variable model.

\begin{theorem}\label{theorem:graphtesting}
If a physical experiment is compatible with reference experiment 1 (2), and simulates it, then the physical experiment is equivalent to reference experiment 1(2).
\end{theorem}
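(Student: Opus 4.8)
The plan is to follow the operator-relation-then-isometry strategy of \cite{McKague:2010:Generalized-sel}. First I would convert the simulation hypothesis into algebraic relations on the physical state $\ket{\psi^{\prime}}$. Each physical measurement is a two-outcome observable, so $(X^{\prime}_v)^2 = (Z^{\prime}_v)^2 = I$ and every $S^{\prime}_v$ is a genuine $\pm 1$ observable. Since $S_v\ket{\psi} = \ket{\psi}$, the reference correlation $\bra{\psi}S_v\ket{\psi}$ equals $1$; simulation forces $\bra{\psi^{\prime}}S^{\prime}_v\ket{\psi^{\prime}} = 1$, and as $S^{\prime}_v$ squares to the identity this upgrades to the operator relation $S^{\prime}_v\ket{\psi^{\prime}} = \ket{\psi^{\prime}}$. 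In particular $Z^{\prime N_v}\ket{\psi^{\prime}} = X^{\prime}_v\ket{\psi^{\prime}}$, and since $u \in N_v$ on the chosen edge, $X^{\prime}_v Z^{\prime N_v\setminus\{u\}}\ket{\psi^{\prime}} = Z^{\prime}_u\ket{\psi^{\prime}}$. These are the stabilizer half of the relations I ultimately need.

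The heart of the argument, and the step I expect to be hardest, is extracting the complementarity of $X^{\prime}_v$ and $Z^{\prime}_v$ on the state. This cannot come from the $X$ and $Z$ correlations alone: the hidden-variable model of the appendix reproduces all such correlations on a bipartite graph with commuting (classical) observables, so the odd cycle (experiment 1) or the $D$ measurements (experiment 2) must do the work. For experiment 2 I would evaluate the reference values $\langle Z_u X_u\rangle = 0$ and $\langle D_u X_u\rangle = \langle D_u Z_u\rangle = \tfrac{1}{\sqrt2}$, transport them to the physical side using $Z^{\prime N_u}\ket{\psi^{\prime}} = X^{\prime}_u\ket{\psi^{\prime}}$ and $X^{\prime}_v Z^{\prime N_v\setminus\{u\}}\ket{\psi^{\prime}} = Z^{\prime}_u\ket{\psi^{\prime}}$, and then expand $\norm{\bigl(D^{\prime}_u - \tfrac{1}{\sqrt2}(X^{\prime}_u + Z^{\prime}_u)\bigr)\ket{\psi^{\prime}}}^2$; the cross terms cancel exactly, giving $D^{\prime}_u\ket{\psi^{\prime}} = \tfrac{1}{\sqrt2}(X^{\prime}_u + Z^{\prime}_u)\ket{\psi^{\prime}}$. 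This pins the action of $D^{\prime}_u$ on the state and, with $(D^{\prime}_u)^2 = I$ and the stabilizer relations, supplies the $X^{\prime}_u$–$Z^{\prime}_u$ complementarity. For experiment 1 the analogous certificate is the extra observable $X^{V^{\prime}}Z^{N(V^{\prime})}$: because $\prod_{v\in V^{\prime}}S_v = X^{V^{\prime}}Z^{N(V^{\prime})}$ in the reference, both the product of stabilizers around the induced cycle and its normal-ordered form stabilize $\ket{\psi^{\prime}}$, and comparing them produces a net sign governed by the parity of the cycle; the oddness is exactly what obstructs a consistent commuting assignment and forces the relation at a cycle vertex.

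Next I would propagate the complementarity from the initial vertex to every vertex of $G$. Connectedness supplies a path to any target vertex, and the stabilizer relations $S^{\prime}_v\ket{\psi^{\prime}} = \ket{\psi^{\prime}}$ together with products such as $S^{\prime}_vS^{\prime}_u\ket{\psi^{\prime}} = \ket{\psi^{\prime}}$ let me transport the relation across each edge, rewriting the $X^{\prime}$ and $Z^{\prime}$ of a neighbour in terms of operators already controlled. Care is needed because these relations hold on states obtained by acting on $\ket{\psi^{\prime}}$ with further local operators rather than on $\ket{\psi^{\prime}}$ itself, so the bookkeeping of which relation holds on which vector is the delicate part. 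The outcome is, for every $v$, the pair of relations $S^{\prime}_v\ket{\psi^{\prime}} = \ket{\psi^{\prime}}$ and $\{X^{\prime}_v, Z^{\prime}_v\}\ket{\psi^{\prime}} = 0$, which is precisely the input demanded by the generalized self-testing machinery.

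Finally I would construct the local isometry $\Phi = \bigotimes_v \Phi_v$, each $\Phi_v$ the standard SWAP gadget that adjoins an ancilla qubit and applies controlled-$Z^{\prime}_v$ and controlled-$X^{\prime}_v$ operations to swap the logical qubit at $v$ onto the ancilla. Using the stabilizer and anticommutation relations I would compute $\Phi(\ket{\psi^{\prime}})$ and check that it factors as $\ket{junk}\otimes\ket{\psi}$, and likewise verify $\Phi(M^{\prime}\ket{\psi^{\prime}}) = \ket{junk}\otimes M\ket{\psi}$ for each reference observable $M$ by pushing $M^{\prime}$ through the gadget. Given the relations, this last part is the mechanical computation of \cite{McKague:2010:Generalized-sel} carried out for the stabilizer measurements, the $D$ measurements, and the extra cycle measurement, which completes the equivalence.
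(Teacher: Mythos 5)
Your proposal is correct and follows essentially the same route as the paper's own proof: upgrading the unit correlations to the state relations $S^{\prime}_{v}\ket{\psi^{\prime}} = \ket{\psi^{\prime}}$, establishing anticommutation at a single vertex (via the odd-cycle observable for experiment 1, and via the three $D$ correlations plus $(D^{\prime}_{u})^{2} = I$ for experiment 2 --- your norm expansion of $\norm{(D^{\prime}_{u} - \tfrac{1}{\sqrt{2}}(X^{\prime}_{u}+Z^{\prime}_{u}))\ket{\psi^{\prime}}}$ is just a repackaging of the paper's orthogonality-plus-norm argument in Lemma~\ref{lemma:dxyanticommute}), propagating it along paths using connectedness exactly as in Lemma~\ref{lemma:chaining}, and finishing with the same SWAP-gadget isometry. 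The one detail you leave implicit is how, in experiment 1, the comparison of $\prod_{v \in V^{\prime}} S^{\prime}_{v}$ with $-X^{\prime V^{\prime}}Z^{\prime N(V^{\prime})}$ yields the relation without circularly invoking anticommutation --- the paper handles this by ordering the stabilizer factors around the cycle so that every same-vertex cancellation except at the target vertex is between adjacent identical operators, leaving only $-(X^{\prime}Z^{\prime})_{u}(X^{\prime}Z^{\prime})_{u}\ket{\psi^{\prime}} = \ket{\psi^{\prime}}$ --- but your sketch correctly localizes the parity obstruction at one cycle vertex, which is precisely that argument.
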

%

\section{Proof of main result}
The proof consists of three sections.  First we determine the expected values for the measurements in the reference experiment.  Next we show that if the physical experiment simulates the reference experiment then the $X^{\prime}$ and $Z^{\prime}$ operators anti-commute.  Finally we construct the local isometry and use the anti-commuting property of the $X^{\prime}$ and $Z^{\prime}$ operators to show equivalence.

\subsection{Probability distribution from graph states}
We first derive the probability distributions that arise from a graph state with trusted measurements.  This establishes the conditions that a physical experiment must meet in order to simulate the reference experiment.

Clearly, the stabilizer measurements all satisfy
\begin{equation}
\bra{\psi}S_{v} \ket{\psi} = 1.
\end{equation}
For reference experiment 1, we need one additional measurement.
\begin{lemma}\label{lemma:evendegreeoddedges}
Let $G= (V,E)$ be a graph and let $\ket{\psi}$ be the corresponding graph state.  Let $V^{\prime} \subseteq V$ and let $G^{\prime} = (V^{\prime}, E^{\prime})$ be the induced subgraph on $V^{\prime}$.  If each $v \in V^{\prime}$ has even degree then
\begin{equation}
(-1)^{|E^{\prime}|}X^{V^{\prime}} Z^{N(V^{\prime})} \ket{\psi} = \ket{\psi}
\end{equation}
\end{lemma}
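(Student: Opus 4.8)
The plan is to realise the operator $X^{V'} Z^{N(V')}$, up to the sign $(-1)^{|E'|}$, as the product of the stabilizer generators attached to the vertices of $V'$. Concretely, I would establish the purely algebraic identity
\begin{equation}
\prod_{v \in V'} S_v = (-1)^{|E'|} X^{V'} Z^{N(V')}.
\end{equation}
Granting this, the lemma follows immediately: since $S_v \ket{\psi} = \ket{\psi}$ for every $v$, applying the generators one at a time gives $\left(\prod_{v \in V'} S_v\right)\ket{\psi} = \ket{\psi}$, and substituting the identity yields $(-1)^{|E'|} X^{V'} Z^{N(V')}\ket{\psi} = \ket{\psi}$.

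To prove the identity I would fix an arbitrary total order on $V'$ and expand $\prod_{v \in V'} S_v = \prod_{v \in V'} X_v Z^{N_v}$, then reorder the factors so that all $X$ operators stand to the left of all $Z$ operators. The collected $X$ operators are exactly $X^{V'}$, while the collected $Z$ operators are $\prod_{v \in V'} Z^{N_v}$; since all $Z$ operators commute, no sign arises from grouping them, and by definition this product is $Z^{N(V')}$ (the factor $Z_w$ survives precisely when $w$ has an odd number of neighbours in $V'$). Under the even-degree hypothesis every $w \in V'$ has an even number of neighbours in $V'$, so $Z^{N(V')}$ is supported entirely outside $V'$; hence $X^{V'}$ and $Z^{N(V')}$ act on disjoint qubits and the expression $X^{V'} Z^{N(V')}$ is an unambiguous Hermitian observable.

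The one real point is to compute the sign generated by the reordering, which I expect to be the main obstacle. Each factor of $-1$ comes from commuting some $X_u$ past a $Z_u$ that occurs inside a block $Z^{N_v}$, which happens exactly when $u$ is a neighbour of $v$ with both $u,v \in V'$, i.e. when $(u,v) \in E'$, and when $v$ precedes $u$ in the chosen order. For each undirected edge of the induced subgraph $G'$ exactly one of its two endpoints precedes the other, so every edge of $G'$ contributes one factor of $-1$ and no pair contributes twice; the total sign is therefore $(-1)^{|E'|}$, as claimed. A clean way to package this bookkeeping is an induction on $|V'|$, adjoining one vertex $v$ at a time and checking that the new generator $S_v$ contributes one sign flip for each already-present neighbour of $v$ in $V'$, which increments $|E'|$ by exactly that many edges.
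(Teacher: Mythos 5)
Your proposal is correct and follows essentially the same route as the paper: both multiply the stabilizer generators $S_v$ over $v \in V'$, observe that the $Z$ factors cancel pairwise to leave $Z^{N(V')}$, and count one factor of $-1$ per edge of $G'$ arising from anticommuting $X_u$ past $Z_u$. Your fixed-total-order crossing count is just a linearized version of the paper's matrix bookkeeping (one $Z$ above the diagonal per edge), and your explicit remark that the even-degree hypothesis makes $X^{V'}$ and $Z^{N(V')}$ act on disjoint qubits is a small but accurate clarification the paper leaves implicit.
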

\begin{proof}
Consider the product
\begin{equation}\label{eq:subgraphstabilizers}
\left(\prod_{v \in V^{\prime}} S_{v}\right) \ket{\psi}
\end{equation}
First note that there will be an $X_{v}$ factor for each $v \in V^{\prime}$.  As well, there will be a $Z_{u}$ factor for each $v \in V^{\prime}$ adjacent to $u$.  Canceling pairs we see that there will be an overall $Z_{u}$ factor exactly when there are an odd number of neighbours of $u$ in $V^{\prime}$.  Hence the $Z$ factor will be $Z^{N(V^{\prime})}$.  We only need to determine the sign.  Note that the $Z_{u}$, $u \notin V^{\prime}$ factor all commute so we need not consider them any more.

The order of multiplication in equation~(\ref{eq:subgraphstabilizers}) does not matter since the stabilizers all commute.  For convenience, then, we may write the product as the product of the rows of a matrix with each column corresponding to a $v \in V^{\prime}$ and each row a stabilizer.  We choose the order of the rows so that the $X$s appear along the diagonal\footnote{The matrix may be constructed by taking the adjacency matrix of $G^{\prime}$, which has a 1 in the $u,v$ position when $(u,v) \in E^{\prime}$, replacing the diagonal with $X$s, the 0s with $I$s and the $1$s with $Z$.}.  For a 5-cycle, for instance, we have
\begin{equation}
\begin{matrix}
X & Z & I & I & Z \\
Z & X & Z & I & I \\
I & Z & X & Z & I \\
I & I & Z & X & Z \\
Z & I & I & Z & X \\
\end{matrix}. 
\end{equation}
The factor on each vertex equals the product of the entries in the corresponding column.  In each column there is one $X$ and one $Z$ for each neighbour.  The factor will be either $\pm XZ$ or $\pm X$, depending on whether there is an odd or even number of $Z$s.  The sign depends on the number of $Z$s above the $X$, since we must use the fact that $XZ = - ZX$ once for each such $Z$.  Combining the signs from all vertices, there is a $-1$ factor for each $Z$ above the diagonal, and hence one for each edge in $G^{\prime}$.  The overall sign, then, is $(-1)^{|E^{\prime}|}$.
\end{proof}

For reference experiment 1 we consider an odd cycle, and hence we obtain
\begin{equation}
\bra{\psi}X^{V^{\prime}} Z^{N(V)^{\prime}} \ket{\psi} = -1.
\end{equation}

Reference experiment 2 has three measurements other than the stabilizer.  First we have $Z_{u}Z^{N_{u}}$.  This is just $S_{u}$ with $X_{u}$ replaced by $Z_{u}$.  Since $X$ and $Z$ anti-commute we have
\begin{equation}
\bra{\psi}Z_{u}Z^{N_{u}}\ket{\psi} = 0.
\end{equation}
From this, and linearity, we obtain
\begin{equation}
\bra{\psi}D_{u} Z^{N_{u}} \ket{\psi} = \frac{1}{\sqrt{2}}.
\end{equation}
Finally, the operator $D_{u} X_{v} Z^{N_{v} \setminus \{u\} }$ is a linear combination of $S_{v}$ and $S_{v}$ with $Z_{u}$ replaced with $X_{u}$.  As above, then, we find
\begin{equation}
\bra{\psi}D_{u} X_{v} Z^{N_{v} \setminus \{u\} }\ket{\psi} = \frac{1}{\sqrt{2}}.
\end{equation}
%

\subsection{Statistics imply anti-commuting observables}
We now suppose that the physical experiment simulates either reference experiment 1 or 2 and show that this implies that the $X^{\prime}$ and $Z^{\prime}$ measurements on each vertex anti-commute (on the support of $\ket{\psi}$).

First, note that $\bra{\psi^{\prime}}S^{\prime}_{v}\ket{\psi^{\prime}} = 1$ implies $S^{\prime}_{v} \ket{\psi^{\prime}} = \ket{\psi^{\prime}}$, and similarly for other measurements.  This allows us to immediately drop probabilities and deal with states directly.

As a first step towards our goal, we prove a type of induction lemma which says that if the $X^{\prime}$ and $Z^{\prime}$ observables anti-commute on vertex, then the same is true for an adjacent vertex.  Thus we need only show anti-commuting observables on one vertex, and apply the lemma repeatedly along paths to all other vertices (since $G$ is connected.)  

\begin{lemma}\label{lemma:chaining}
Given a graph $G$ with $(u,v) \in E$.  If observables $X^{\prime}_{v}, Z^{\prime}_{v}, X^{\prime}_{u}, Z^{\prime}_{u}$, and $ \{Z^{\prime}_{w}| w \in N_{u} \cup N_{v}\}$ and state $\ket{\psi^{\prime}}$ satisfy
\begin{equation}
S^{\prime}_{u} \ket{\psi^{\prime}} = S^{\prime}_{v} \ket{\psi^{\prime}} = \ket{\psi^{\prime}}
\end{equation}
\begin{equation}
(X^{\prime}Z^{\prime})_{v} \ket{\psi^{\prime}} = - (Z^{\prime}X^{\prime})_{v} \ket{\psi^{\prime}}
\end{equation}
then %
\begin{equation}
(X^{\prime}Z^{\prime})_{u} \ket{\psi^{\prime}} = - (Z^{\prime}X^{\prime})_{u} \ket{\psi^{\prime}}
\end{equation}

\end{lemma}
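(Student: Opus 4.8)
The plan is to translate everything into identities of the form $(\text{operator})\ket{\psi^{\prime}} = (\text{operator})\ket{\psi^{\prime}}$ and then compute $X^{\prime}_{u}Z^{\prime}_{u}\ket{\psi^{\prime}}$ directly, pushing it into the form $-Z^{\prime}_{u}X^{\prime}_{u}\ket{\psi^{\prime}}$ by repeated substitution. Two substitution rules drive the argument. From $S^{\prime}_{u}\ket{\psi^{\prime}} = \ket{\psi^{\prime}}$ and the fact that $(X^{\prime}_{u})^{2} = I$ (observables have eigenvalues $\pm1$), I get $X^{\prime}_{u}\ket{\psi^{\prime}} = Z^{\prime N_{u}}\ket{\psi^{\prime}}$. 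From $S^{\prime}_{v}\ket{\psi^{\prime}} = \ket{\psi^{\prime}}$, together with $u \in N_{v}$ (since $(u,v) \in E$), I can peel off the $Z^{\prime}_{u}$ factor appearing inside $Z^{\prime N_{v}}$ to obtain $Z^{\prime}_{u}\ket{\psi^{\prime}} = Z^{\prime N_{v}\setminus\{u\}}X^{\prime}_{v}\ket{\psi^{\prime}}$. Throughout I would use that observables on distinct vertices commute and that each observable squares to the identity.

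First I would substitute the second rule into $X^{\prime}_{u}Z^{\prime}_{u}\ket{\psi^{\prime}}$, then commute $X^{\prime}_{u}$ to the right past $Z^{\prime N_{v}\setminus\{u\}}$ and $X^{\prime}_{v}$ (all supported off vertex $u$), and apply the first rule to replace the freed $X^{\prime}_{u}\ket{\psi^{\prime}}$ by $Z^{\prime N_{u}}\ket{\psi^{\prime}}$. Writing $Z^{\prime N_{u}} = Z^{\prime}_{v}Z^{\prime N_{u}\setminus\{v\}}$ (using $v \in N_{u}$), this leaves a factor $X^{\prime}_{v}Z^{\prime}_{v}$ sitting next to strings of $Z^{\prime}$ operators supported only on vertices other than $v$.

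The crucial — and only delicate — step is that the anticommutation hypothesis $(X^{\prime}Z^{\prime})_{v}\ket{\psi^{\prime}} = -(Z^{\prime}X^{\prime})_{v}\ket{\psi^{\prime}}$ is available only when $X^{\prime}_{v}Z^{\prime}_{v}$ acts directly on $\ket{\psi^{\prime}}$, not as a free operator identity. What makes the argument go through is that the remaining $Z^{\prime}$ factors, namely $Z^{\prime N_{v}\setminus\{u\}}$ and $Z^{\prime N_{u}\setminus\{v\}}$, are supported on vertices different from $v$ and hence commute with both $X^{\prime}_{v}$ and $Z^{\prime}_{v}$. I would slide them out of the way so that $X^{\prime}_{v}Z^{\prime}_{v}$ stands immediately to the left of $\ket{\psi^{\prime}}$, apply the hypothesis to introduce the sign $-1$, and then replace $X^{\prime}_{v}\ket{\psi^{\prime}}$ by $Z^{\prime N_{v}}\ket{\psi^{\prime}}$ once more via $S^{\prime}_{v}$.

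Finally I would clean up the resulting all-$Z^{\prime}$ expression: the factor $Z^{\prime N_{v}\setminus\{u\}}$ appears twice and cancels since each $Z^{\prime}_{w}$ squares to $I$, while $Z^{\prime N_{u}\setminus\{v\}}Z^{\prime}_{v}$ recombines to $Z^{\prime N_{u}}$. What remains is $-Z^{\prime}_{u}Z^{\prime N_{u}}\ket{\psi^{\prime}}$, and commuting $Z^{\prime}_{u}$ past $Z^{\prime N_{u}}$ and applying $Z^{\prime N_{u}}\ket{\psi^{\prime}} = X^{\prime}_{u}\ket{\psi^{\prime}}$ yields exactly $-Z^{\prime}_{u}X^{\prime}_{u}\ket{\psi^{\prime}}$, which is the claim. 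I expect the bookkeeping of which vertices each $Z^{\prime}$ string touches to be the only place requiring care; the real content is the commutation observation in the previous paragraph, which lets the vertex-$v$ anticommutation be applied cleanly to $\ket{\psi^{\prime}}$ itself.
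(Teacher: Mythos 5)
Your proof is correct and is essentially the paper's argument in disguise: the paper evaluates $(Z^{\prime}X^{\prime})_{u}S^{\prime}_{u}S^{\prime}_{v}S^{\prime}_{u}S^{\prime}_{v}\ket{\psi^{\prime}}$ in one shot, and your two substitution rules are exactly single applications of $S^{\prime}_{u}$ and $S^{\prime}_{v}$, each used twice together with one invocation of the vertex-$v$ anticommutation hypothesis, so chaining them reproduces the same computation. The only difference is presentational: you make explicit the pairwise cancellation of the spectator $Z^{\prime}_{w}$ strings on $(N_{u}\cup N_{v})\setminus\{u,v\}$, which the paper absorbs silently when it reduces the stabilizer product to operators on $u$ and $v$ alone.
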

\begin{proof}
From the fact that $(u,v) \in E$ we obtain
\begin{eqnarray}
(Z^{\prime}X^{\prime})_{u}\ket{\psi^{\prime}} & = & (Z^{\prime}X^{\prime})_{u}S^{\prime}_{u} S^{\prime}_{v}S^{\prime}_{u}S^{\prime}_{v} \ket{\psi^{\prime}} \\
 & = & (Z^{\prime}X^{\prime})_{u}X^{\prime}_{u} Z^{\prime}_{v} X^{\prime}_{v} Z^{\prime}_{u} X^{\prime}_{u} Z^{\prime}_{v} X^{\prime}_{v} Z^{\prime}_{u} \ket{\psi^{\prime}} \\
& = & (X^{\prime}Z^{\prime})_{u} (Z^{\prime}X^{\prime})_{v} (Z^{\prime}X^{\prime})_{v}\ket{\psi^{\prime}} \\
& = & -(X^{\prime}Z^{\prime})_{u}(Z^{\prime}X^{\prime})_{v} (X^{\prime}Z^{\prime})_{v}\ket{\psi^{\prime}} \\
& = & - (X^{\prime}Z^{\prime})_{u} \ket{\psi^{\prime}}\\
\end{eqnarray}
\end{proof}

For reference experiment 1 we show that the observables $X^{\prime}$ and $Z^{\prime}$ anti-commute for each vertex in the induced odd cycle. 

\begin{lemma}\label{lemma:oddcycleanticommute} 
Let $G= (E,V)$ be a connected graph and let $C = (E^{\prime}, V^{\prime})$ be an induced odd cycle of $G$ and let $u \in V^{\prime}$.  If observables $X^{\prime}_{u}, Z^{\prime}_{u}$ for $u \in V^{\prime}$,  $\{Z^{\prime}_{w}| w \text{ has a neighbour in } C \}$ and state $\ket{\psi^{\prime}}$ satisfy
\begin{equation}
S^{\prime}_{u} \ket{\psi^{\prime}} = \ket{\psi^{\prime}}
\end{equation}
\begin{equation}
-X^{^{\prime}V^{\prime}} Z^{\prime N(V^{\prime})} \ket{\psi^{\prime}} = \ket{\psi^{\prime}}
\end{equation}

 Then $(X^{\prime}Z^{\prime})_{u}\ket{\psi} = - (Z^{\prime}X^{\prime})_{u}\ket{\psi}$ for each $u \in V^{\prime}$.
\end{lemma}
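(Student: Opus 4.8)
The plan is to prove the anti-commutation relation at one carefully chosen vertex $v_d$ of the cycle and then spread it to the whole of $V'$ by repeatedly invoking Lemma~\ref{lemma:chaining}: that lemma moves anti-commutation across any edge whose two endpoints carry satisfied stabilizer conditions, and since the cycle is connected, walking $v_d$ around it reaches every $u\in V'$. All the real content is therefore in producing the base case at $v_d$.

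For the base case I would use that $\ket{\psi'}$ is fixed by each stabilizer, so the product $T=\prod_{w\in V'}S'_w$ satisfies $T\ket{\psi'}=\ket{\psi'}$ for any ordering of the factors. I then reduce $T$ to canonical operator form. Each cycle vertex contributes $X'$ from its own stabilizer together with a $Z'$ from each of its two cycle-neighbours --- and here it is essential that the cycle is \emph{induced}, so that inside $V'$ a vertex is adjacent to exactly its two cycle-neighbours and nothing else --- while each outside vertex accumulates the $Z'$ factors recorded by $Z'^{N(V')}$. Because operators on distinct subsystems commute and $X'^2=Z'^2=I$, this reduction is sign-free: a vertex whose own stabilizer sits (in the chosen order) before or after \emph{both} neighbour stabilizers has its two $Z'$ cancel and contributes a bare $X'$, whereas a vertex whose stabilizer sits \emph{between} its two neighbour stabilizers contributes $(Z'X'Z')$. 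The combinatorial crux is that, because the cycle has odd length, the vertices can be ordered so that exactly one vertex $v_d$ is of the latter type: ordering them to zig-zag around the cycle makes each vertex a local extremum of the ordering except for a single defect that odd parity makes unavoidable. With such an order,
\begin{equation}
T = (Z'X'Z')_{v_d}\, X'^{V'\setminus\{v_d\}}\, Z'^{N(V')}.
\end{equation}

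Writing $W = X'^{V'}Z'^{N(V')}$, so that the hypothesis reads $W\ket{\psi'}=-\ket{\psi'}$, the operators $T$ and $W$ now agree on every subsystem except $v_d$, where $T$ carries $(Z'X'Z')$ and $W$ carries $X'$. All the shared $X'$ and $Z'$ factors cancel (again by $X'^2=Z'^2=I$ and commutativity across subsystems), giving the sign-free operator identity $TW = (Z'X'Z'X')_{v_d}$. Applying this to $\ket{\psi'}$ and using $T\ket{\psi'}=\ket{\psi'}$ and $W\ket{\psi'}=-\ket{\psi'}$ gives $(Z'X'Z'X')_{v_d}\ket{\psi'}=-\ket{\psi'}$; multiplying on the left by $(X'Z')_{v_d}$ and collapsing squares turns this into $(X'Z')_{v_d}\ket{\psi'}=-(Z'X')_{v_d}\ket{\psi'}$, the desired anti-commutation at $v_d$. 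Lemma~\ref{lemma:chaining} then carries it to every $u\in V'$.

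The step I expect to be the main obstacle is the bookkeeping that collapses $T$ to the single-defect form: I must check that the induced hypothesis really restricts each cycle vertex to exactly its two neighbours (so the clean cancellation $X'Z'Z'=X'$ is exact), that the outside $Z'$ factors assemble precisely into $Z'^{N(V')}$ and hence annihilate those of $W$, and --- most importantly --- that an ordering with a single ``between'' vertex exists at all. This last point is exactly where oddness enters, since an even cycle admits a perfectly alternating order with \emph{no} such vertex, and the argument then correctly fails to produce any anti-commutation from $X'$ and $Z'$ alone.
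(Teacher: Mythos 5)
Your proposal is correct and takes essentially the same route as the paper: the paper's proof likewise multiplies all cycle stabilizers in an evens-then-odds (zig-zag) order against the hypothesis $-X^{\prime V^{\prime}} Z^{\prime N(V^{\prime})}\ket{\psi^{\prime}} = \ket{\psi^{\prime}}$, so that every vertex cancels sign-free except a single defect vertex carrying $X^{\prime}Z^{\prime}X^{\prime}Z^{\prime}$, whose existence is forced exactly by the odd length of the cycle. The only cosmetic difference is that the paper extends the conclusion to every $u \in V^{\prime}$ by relabeling the cycle (WLOG $u$ is vertex $1$) rather than by invoking Lemma~\ref{lemma:chaining} as you do; both are valid.
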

\begin{proof}
Number the vertices in the cycle $1$ through $k$ so $1$ is adjacent to $2$, etc..  Without loss of generality we may assume that $u$ is vertex $1$.  We next consider the following state:
\begin{equation}
-X^{\prime V^{\prime}} Z^{\prime N(V^{\prime})} \prod_{j=1}^{\frac{k-1}{2}} S^{\prime}_{2j} \prod_{j=1}^{\frac{k-1}{2}} S^{\prime}_{2j-1} \ket{\psi^{\prime}} = \ket{\psi^{\prime}}
\end{equation}
Note that the factor $Z^{\prime N(V^{\prime})}$ is cancelled by $Z$ operations arising from the products of the $S^{\prime}_{v}$.  We may write the product as the product of the rows of the following matrix, where column $j$ corresponds to vertex $j$ in the cycle:
\begin{equation}
	\begin{matrix}
	-X^{\prime} & X^{\prime} & X^{\prime} & X^{\prime} & X^{\prime} &  \dots &X^{\prime} & X^{\prime} & X^{\prime} \\
	Z^{\prime} & X^{\prime} & Z^{\prime} & I & I &  \dots & I & I & I \\
	I & I & Z^{\prime} & X^{\prime} & Z^{\prime} &  \dots & I & I & I \\
	& & & & & \vdots  \\
	I & I & I & I & I & \dots & Z^{\prime} & X^{\prime} & Z^{\prime} \\
	X^{\prime} & Z^{\prime} & I & I & I & \dots & I & I & Z^{\prime} \\
	I & Z^{\prime} & X^{\prime} & Z^{\prime} & I & \dots & I & I & I \\
	I & I & I & Z^{\prime} & X^{\prime} & \dots & I & I & I \\
			& & & & & \vdots  \\
	Z^{\prime} & I & I & I & I & \dots & I & Z^{\prime} & X^{\prime} \\		
	\end{matrix}
\end{equation}
In each column there are two $X^{\prime}$ operators and two $Z^{\prime}$ operators.  Also, their arrangement is such that, for every column except the first, the two $X^{\prime}$ operators are next to one another, so they cancel directly, and similarly for the $Z^{\prime}$ operators.  Hence
\begin{equation}
-(X^{\prime}Z^{\prime})_{u} (X^{\prime}Z^{\prime})_{u} \ket{\psi^{\prime}} = \ket{\psi^{\prime}}
\end{equation}
The desired result follows immediately.
\end{proof}

For reference experiment 2, we have one additional measurement on a particular vertex $u$.  We use this extra measurement to establish that the $X^{\prime}$ and $Z^{\prime}$ measurements on $u$ anti-commute.

\begin{lemma}\label{lemma:dxyanticommute}
  Let $G= (V,E)$ be a connected graph with $(u,v) \in E$.  If observables $D^{\prime}_{u}, X^{\prime}_{v}, Z^{\prime}_{v}, X^{\prime}_{u}, Z^{\prime}_{u}, \{Z^{\prime}_{w}| w \in N_{u} \cup N_{v}\}$ and state $\ket{\psi^{\prime}}$ satisfy
\begin{equation}
S^{\prime}_{u} \ket{\psi^{\prime}} = S^{\prime}_{v} \ket{\psi^{\prime}} = \ket{\psi^{\prime}}
\end{equation}
\begin{eqnarray}
\bra{\psi^{\prime}} Z^{\prime}_{u} Z^{\prime N_{u}} \ket{\psi^{\prime}} & = & 0 \\ 
\bra{\psi^{\prime}} D^{\prime}_{u} Z^{\prime N_{u}} \ket{\psi^{\prime}} & = &  \frac{1}{\sqrt{2}} \\
\bra{\psi^{\prime}} D^{\prime}_{u} X_{v} Z^{\prime N_{v} \setminus u} \ket{\psi^{\prime}} & = & \frac{1}{\sqrt{2}} \\
\end{eqnarray}
then 
$-(X^{\prime}Z^{\prime})_{u} \ket{\psi^{\prime}} = (Z^{\prime}X^{\prime})_{u}\ket{\psi^{\prime}}$
\end{lemma}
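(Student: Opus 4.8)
The plan is to reduce the three correlation hypotheses to statements living entirely on vertex $u$, then run a sum-of-squares argument to pin down the action of $D'_u$ on $\ket{\psi'}$, and finally extract anti-commutation from $(D'_u)^2 = I$. Throughout I use that every observable is a $\pm1$-valued involution, so $(X'_u)^2 = (Z'_u)^2 = (D'_u)^2 = I$, and that $\bra{\psi'}O\ket{\psi'}=1$ for an involution $O$ already forces $O\ket{\psi'}=\ket{\psi'}$.

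First I would exploit the two stabilizer conditions. From $S'_u\ket{\psi'} = X'_u Z'^{N_u}\ket{\psi'} = \ket{\psi'}$ and $(X'_u)^2 = I$ I get $Z'^{N_u}\ket{\psi'} = X'_u\ket{\psi'}$; writing $S'_v = X'_v Z'_u Z'^{N_v\setminus u}$ and using $S'_v\ket{\psi'} = \ket{\psi'}$ gives likewise $X'_v Z'^{N_v\setminus u}\ket{\psi'} = Z'_u\ket{\psi'}$. Substituting these into the three hypotheses (each of which is the expectation of a Hermitian operator, hence real) collapses them to the purely local relations $\av{Z'_u X'_u} = 0$, $\av{D'_u X'_u} = \tfrac{1}{\sqrt 2}$, and $\av{D'_u Z'_u} = \tfrac{1}{\sqrt 2}$, where $\av{\cdot} = \bra{\psi'}\cdot\ket{\psi'}$. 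Note that $\av{\{D'_u,X'_u\}} = 2\,\mathrm{Re}\,\av{D'_u X'_u} = \sqrt2$ and similarly $\av{\{D'_u,Z'_u\}} = \sqrt2$, while $\av{\{X'_u,Z'_u\}} = 2\,\mathrm{Re}\,\av{X'_u Z'_u} = 0$.

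Next I would consider the non-negative quantity $\norm{(D'_u - \tfrac{1}{\sqrt2}(X'_u + Z'_u))\ket{\psi'}}^2$. Since the operator is Hermitian this equals its squared expectation; expanding with the involution identities and $(X'_u + Z'_u)^2 = 2I + \{X'_u,Z'_u\}$ gives $2 - \tfrac{1}{\sqrt2}(\av{\{D'_u,X'_u\}} + \av{\{D'_u,Z'_u\}}) + \tfrac12\av{\{X'_u,Z'_u\}} = 2 - \tfrac{1}{\sqrt2}(\sqrt2 + \sqrt2) + \tfrac12\cdot 0 = 0$. A zero-norm vector is zero, so I obtain the key relation $D'_u\ket{\psi'} = \tfrac1{\sqrt2}(X'_u + Z'_u)\ket{\psi'}$, together with its adjoint $\bra{\psi'}D'_u = \tfrac1{\sqrt2}\bra{\psi'}(X'_u + Z'_u)$.

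Finally I would feed the key relation back through $(D'_u)^2 = I$. Applying $\tfrac1{\sqrt2}(X'_u+Z'_u)$ on the left of the relation yields $\tfrac12(X'_u+Z'_u)^2\ket{\psi'} = \ket{\psi'} + \tfrac12\{X'_u,Z'_u\}\ket{\psi'}$, and the aim is to identify the left-hand side with $(D'_u)^2\ket{\psi'} = \ket{\psi'}$, which would force $\{X'_u,Z'_u\}\ket{\psi'} = 0$, i.e. $(X'Z')_u\ket{\psi'} = -(Z'X')_u\ket{\psi'}$. I expect this closing step to be the main obstacle, and the one needing genuine care: the key relation only controls $D'_u$ on the single vector $\ket{\psi'}$, so replacing $\tfrac1{\sqrt2}(X'_u+Z'_u)$ by $D'_u$ when it acts on $D'_u\ket{\psi'}$ is not immediate, and I would expect to have to combine the relation with its adjoint and the involution property (rather than a naive operator substitution) to discharge it rigorously.
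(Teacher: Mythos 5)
Your first two steps are correct and essentially identical to the paper's: the stabilizer conditions give $X^{\prime}_{u}\ket{\psi^{\prime}} = Z^{\prime N_{u}}\ket{\psi^{\prime}}$ and $Z^{\prime}_{u}\ket{\psi^{\prime}} = X^{\prime}_{v}Z^{\prime N_{v}\setminus u}\ket{\psi^{\prime}}$, and the key relation $D^{\prime}_{u}\ket{\psi^{\prime}} = \frac{1}{\sqrt{2}}\left(X^{\prime}_{u}+Z^{\prime}_{u}\right)\ket{\psi^{\prime}}$ follows (the paper phrases this as: $X^{\prime}_{u}\ket{\psi^{\prime}}$ and $Z^{\prime}_{u}\ket{\psi^{\prime}}$ are orthonormal and $D^{\prime}_{u}\ket{\psi^{\prime}}$ is a unit vector with overlap $\frac{1}{\sqrt{2}}$ on each; your norm-zero expansion is the same computation). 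But the closing step you flag is a genuine gap, and the repair you gesture at --- combining the key relation with its adjoint and the involution property --- cannot succeed, because those ingredients only yield scalar (expectation) information. For instance, equating norms across the key relation gives $\bra{\psi^{\prime}}\{X^{\prime}_{u},Z^{\prime}_{u}\}\ket{\psi^{\prime}} = 0$, which is strictly weaker than the required vector identity $\{X^{\prime}_{u},Z^{\prime}_{u}\}\ket{\psi^{\prime}} = 0$. Worse, the single-site relations you reduced to are consistent with the conclusion failing outright: on $\mathbb{C}^{3}$ take $X^{\prime} = I$, $Z^{\prime} = \mathrm{diag}(1,-1,1)$ and $\ket{\psi^{\prime}} = \frac{1}{\sqrt{2}}\left(\ket{1}+\ket{2}\right)$ with $\ket{1},\ket{2}$ the first two basis vectors. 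Then $\av{X^{\prime}Z^{\prime}} = 0$, and $\frac{1}{\sqrt{2}}\left(X^{\prime}+Z^{\prime}\right)\ket{\psi^{\prime}} = \ket{1}$ is a unit vector with real overlap with $\ket{\psi^{\prime}}$, so a Hermitian involution $D^{\prime}$ exchanging the two vectors exists; all of your reduced relations, and even the key relation itself, hold, yet $\{X^{\prime},Z^{\prime}\}\ket{\psi^{\prime}} = 2Z^{\prime}\ket{\psi^{\prime}} \neq 0$. So no argument using only the vertex-$u$ data can finish the proof.

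The missing idea is locality --- precisely the structure your reduction discarded: $Z^{\prime N_{u}}$ and $X^{\prime}_{v}Z^{\prime N_{v}\setminus u}$ are supported on subsystems other than $u$, hence commute with $D^{\prime}_{u}$ \emph{as operators}, not merely on $\ket{\psi^{\prime}}$. The paper's proof substitutes the stabilizer identities back into the key relation to get $D^{\prime}_{u}\ket{\psi^{\prime}} = \frac{1}{\sqrt{2}}\left(Z^{\prime N_{u}} + X^{\prime}_{v}Z^{\prime N_{v}\setminus u}\right)\ket{\psi^{\prime}}$ and then computes
\begin{equation*}
\ket{\psi^{\prime}} = \left(D^{\prime}_{u}\right)^{2}\ket{\psi^{\prime}} = \frac{1}{\sqrt{2}}\left(Z^{\prime N_{u}} + X^{\prime}_{v}Z^{\prime N_{v}\setminus u}\right)D^{\prime}_{u}\ket{\psi^{\prime}} = \frac{1}{2}\left(Z^{\prime N_{u}} + X^{\prime}_{v}Z^{\prime N_{v}\setminus u}\right)\left(X^{\prime}_{u} + Z^{\prime}_{u}\right)\ket{\psi^{\prime}},
\end{equation*}
where the second equality commutes $D^{\prime}_{u}$ past the disjointly supported operators, and the third applies the key relation a second time --- now legitimately, since $D^{\prime}_{u}$ again acts directly on $\ket{\psi^{\prime}}$. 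Expanding the four terms using $Z^{\prime N_{u}}X^{\prime}_{u}\ket{\psi^{\prime}} = S^{\prime}_{u}\ket{\psi^{\prime}} = \ket{\psi^{\prime}}$, $X^{\prime}_{v}Z^{\prime N_{v}\setminus u}Z^{\prime}_{u}\ket{\psi^{\prime}} = S^{\prime}_{v}\ket{\psi^{\prime}} = \ket{\psi^{\prime}}$, $Z^{\prime N_{u}}Z^{\prime}_{u}\ket{\psi^{\prime}} = (Z^{\prime}X^{\prime})_{u}\ket{\psi^{\prime}}$ and $X^{\prime}_{v}Z^{\prime N_{v}\setminus u}X^{\prime}_{u}\ket{\psi^{\prime}} = (X^{\prime}Z^{\prime})_{u}\ket{\psi^{\prime}}$ yields $\ket{\psi^{\prime}} = \ket{\psi^{\prime}} + \frac{1}{2}\left((X^{\prime}Z^{\prime})_{u} + (Z^{\prime}X^{\prime})_{u}\right)\ket{\psi^{\prime}}$, which forces the anti-commutation on $\ket{\psi^{\prime}}$. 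Your plan is salvageable, but the decisive tool for the step you correctly identified as the obstacle is the tensor-product structure, not adjoints and involutions.
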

\begin{proof}
Since $\bra{\psi^{\prime}} X^{\prime}_{u} Z^{\prime N_{u}} \ket{\psi} = 1$ we have $X^{\prime}_{u} \ket{\psi^{\prime}} = Z^{\prime N_{u}} \ket{\psi}$.  Similarly, $Z^{\prime}_{u} \ket{\psi^{\prime}} = X^{\prime}_{v} Z^{\prime N_{v} \setminus u}\ket{\psi^{\prime}}$.  Along with $\bra{\psi^{\prime}} Z^{\prime}_{u} Z^{\prime N_{u}} \ket{\psi^{\prime}} = 0 $ we find that $X^{\prime}_{u} \ket{\psi^{\prime}}$ is orthogonal to $Z^{\prime}_{u} \ket{\psi^{\prime}}$.  We also obtain
$\bra{\psi^{\prime}} D^{\prime}_{u} Z^{\prime}_{u}\ket{\psi^{\prime}} =  \frac{1}{\sqrt{2}}$ and  $\bra{\psi^{\prime}} D^{\prime}_{u} X^{\prime}_{u}\ket{\psi^{\prime}} =  \frac{1}{\sqrt{2}}$.  Since $D^{\prime}_{u} \ket{\psi^{\prime}}$ has norm 1, we find
\begin{equation}
D^{\prime}_{u} \ket{\psi^{\prime}} = \frac{1}{\sqrt{2}} X^{\prime}_{u}\ket{\psi^{\prime}} + Z^{\prime}_{u}\frac{1}{\sqrt{2}} \ket{\psi^{\prime}}
\end{equation}

Further, since $\left(D^{\prime}_{u}\right)^{2} = I = \left(Z^{\prime}_{u}\right)^{2} = \left(X^{\prime}_{u}\right)^{2}$, and 
\begin{eqnarray}
\ket{\psi^{\prime}} & = & \left(D^{\prime}_{u}\right)^{2} \ket{\psi^{\prime}} \\ 
 &=  &\frac{1}{\sqrt{2}} D^{\prime}_{u} \left(Z^{\prime N_{u}} + X_{v} Z^{\prime N_{v} \setminus u}\right) \ket{\psi^{\prime}} \\
&  = & \frac{1}{2}\left(Z^{\prime N_{u}} + X_{v} Z^{\prime N_{v} \setminus u}\right)\left( X^{\prime}_{u} + Z^{\prime}_{u}\right) \ket{\psi^{\prime}} \\
& = & \frac{1}{2}\left( 2I + (X^{\prime}Z^{\prime})_{u} + (Z^{\prime} X^{\prime})_{u}\right) \ket{\psi^{\prime}} \\
\end{eqnarray}
In order for this to be true, we must have
\begin{equation}
(X^{\prime}Z^{\prime})_{u} \ket{\psi^{\prime}} = - (Z^{\prime} X^{\prime})_{u}\ket{\psi^{\prime}}.
\end{equation}

\end{proof}

We conclude with a technical lemma that allows us to exchange $X^{\prime}_{v}$ operations for $Z^{\prime}_{v}$ operations.

\begin{lemma}\label{lemma:exchangexy}
Let $G=(V,E)$ be a connected graph and let $X^{\prime}_{v}, Z^{\prime}_{v}$ for $vÊ\in V$ and $\ket{\psi^{\prime}}$ (and $D_{u}$ for some $u\in V$) be a physical experiment that simulates reference test 1 (or 2).  Let $G^{\prime}= (V^{\prime}, E^{\prime})$ be an induced subgraph of $G$.  Then
\begin{equation}
(-1)^{|E^{\prime}|} X^{\prime V^{\prime}} \ket{\psi^{\prime}} = Z^{\prime N(V^{\prime})} \ket{\psi^{\prime}}
\end{equation}
\end{lemma}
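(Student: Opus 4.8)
The goal is to show $(-1)^{|E'|} X'^{V'} \ket{\psi'} = Z'^{N(V')} \ket{\psi'}$ for any induced subgraph $G' = (V', E')$. The plan is to proceed by induction on $|V'|$, starting from the single stabilizer relations $S'_v \ket{\psi'} = \ket{\psi'}$, which give $X'_v \ket{\psi'} = Z'^{N_v} \ket{\psi'}$ for each individual vertex (the base case $|V'| = 1$, where $|E'| = 0$). The key tool is that by Lemmas~\ref{lemma:chaining}, \ref{lemma:oddcycleanticommute}, and \ref{lemma:dxyanticommute}, the physical observables $X'_v$ and $Z'_v$ anti-commute on the support of $\ket{\psi'}$ at \emph{every} vertex: connectedness lets us chain the anti-commutation relation out from the odd cycle (experiment 1) or from the distinguished vertex $u$ (experiment 2) to all of $V$.

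The inductive step is the heart of the argument. Suppose the claim holds for $V'$ and pick a new vertex $w \notin V'$; set $V'' = V' \cup \{w\}$ with induced edge set $E''$. The number of new edges is $|E''| - |E'| = |N_w \cap V'|$, the number of neighbours of $w$ inside $V'$. I would write $X'^{V''} = X'_w X'^{V'}$ and apply the stabilizer relation at $w$ to replace $X'_w \ket{\psi'} = Z'^{N_w}\ket{\psi'}$, then use the induction hypothesis on $X'^{V'}$. The subtlety is that $X'^{V'}$ and the operators comprising $Z'^{N_w}$ do not commute: each factor $Z'_x$ with $x \in N_w \cap V'$ must be moved past the corresponding factor $X'_x$ in $X'^{V'}$, and each such exchange produces a sign $-1$ via the anti-commutation relation. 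Accumulating one sign per neighbour of $w$ in $V'$ yields exactly the factor $(-1)^{|N_w \cap V'|}$, which combines with the inductive sign $(-1)^{|E'|}$ to give $(-1)^{|E''|}$.

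The bookkeeping I expect to require care is reconciling the $Z'$ operators so that the right-hand side becomes $Z'^{N(V'')}$ rather than a redundant product: the factor $Z'^{N_w}$ introduced by the stabilizer at $w$ must combine with the $Z'^{N(V')}$ from the hypothesis so that $Z'_x$ factors cancel in pairs exactly as in Lemma~\ref{lemma:evendegreeoddedges}, leaving $Z'_x$ iff $x$ has an odd number of neighbours in $V''$. This is the same parity cancellation seen in the earlier lemmas, now carried out on physical rather than Pauli operators, and it is legitimate only because the $Z'_x$ all mutually commute (they act on distinct subsystems) and each squares to $I$. The main obstacle, then, is not the sign count itself but carefully tracking that every commutation move is either a same-subsystem anti-commutation (costing a sign) or a different-subsystem commutation (free), and that these operators act only on the correct subsystems so that the intended cancellations are valid on the support of $\ket{\psi'}$.
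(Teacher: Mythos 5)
Your proposal is correct, and it proves the lemma by a vertex-by-vertex induction rather than the paper's one-shot computation. The paper simply multiplies all the stabilizers $S^{\prime}_{v}$, $v \in V^{\prime}$, against $\ket{\psi^{\prime}}$ at once, arranges the product as the rows of a symmetric matrix with $X^{\prime}$s on the diagonal (exactly as in Lemma~\ref{lemma:evendegreeoddedges}), and observes that collecting the $X^{\prime}$s costs one anti-commutation per $Z^{\prime}$ above the diagonal, i.e.\ one per edge, giving $(-1)^{|E^{\prime}|}$ globally. Your inductive step performs the same count in telescoped form: adding a vertex $w$ costs $(-1)^{|N_{w} \cap V^{\prime}|}$, and summing $|N_{w} \cap V^{\prime}|$ over the insertion order counts each edge of $E^{\prime}$ exactly once (it is precisely the ``$Z^{\prime}$s above the diagonal'' count when the matrix rows are ordered by your vertex ordering), while your parity cancellation $Z^{\prime N(V^{\prime})} Z^{\prime N_{w}} = Z^{\prime N(V^{\prime} \cup \{w\})}$ reproduces the paper's $Z$-cancellation. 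What your route buys is rigor at the one genuinely delicate point, which the paper leaves implicit: the relation $(X^{\prime}Z^{\prime})_{v}\ket{\psi^{\prime}} = -(Z^{\prime}X^{\prime})_{v}\ket{\psi^{\prime}}$ is state-dependent, not an operator identity, so each use inside a longer product must be justified. In your step this is clean: on each subsystem $x \in N_{w} \cap V^{\prime}$ the string contains exactly one $X^{\prime}_{x}$ and one $Z^{\prime}_{x}$, so after commuting away the (free) cross-subsystem factors the swap is an application of the known state equality, with both sides then multiplied by the remaining operator --- and likewise the induction hypothesis, being an equality of states, may be multiplied on the left by $Z^{\prime N_{w}}$. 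The paper's version is shorter and reuses Lemma~\ref{lemma:evendegreeoddedges} verbatim; yours makes the bookkeeping local and self-justifying, at the cost of an explicit induction. Note also that both arguments consume the same resources for the robustness analysis ($|V^{\prime}|$ stabilizer multiplications and $|E^{\prime}|$ anti-commutations), so your proof is a drop-in replacement there as well.
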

\begin{proof}
We use the previous lemmas to conclude that $X^{\prime}_{v} Z^{\prime}_{v}\ket{\psi^{\prime}} = - Z^{\prime}_{v} X^{\prime}_{v} \ket{\psi^{\prime}}$ for each $v$.  Then we repeat the argument used in the proof of lemma~\ref{lemma:evendegreeoddedges}.  Essentially, we look at the product
\begin{equation}
\prod_{v} S^{\prime}_{v} \ket{\psi^{\prime}}.
\end{equation}
Writing this product out as a the product of rows of a symmetric matrix with $X^{\prime}$s along the diagonal, we see that in order to get all the $X^{\prime}$s together we must use the anti-commuting relation once for each $Z^{\prime}$ above the diagonal.  Since there is one $Z^{\prime}$ above the diagonal for each edge, we obtain the factor $(-1)^{|E^{\prime}|}$.

\end{proof}
%

\subsection{Constructing the isometry}
The local isometry $\Phi$ that we use to show equivalence between the physical experiment and the reference experiment is the tensor product of isometries $\Phi_{v}$ for various $v \in V$, is in the circuit shown in figure~\ref{fig:epr_local_unitary_circuit}.

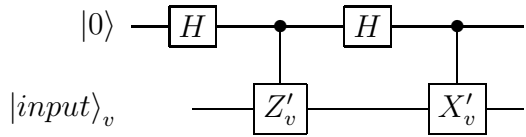
\begin{figure}[h]
\[
\Qcircuit @C=0.5cm @R=0.5cm{
\lstick{\ket{0}}  & \gate{H} & \ctrl{1}  & \gate{H} & \ctrl{1} &\qw \\
\lstick{\ket{input}_{v}} &  & \gate{Z^{\prime}_{v}} & \qw & \gate{X^{\prime}_{v}} & \qw\\
}
\]
\caption{Circuit for $\Phi_{v}$}
\label{fig:epr_local_unitary_circuit}
\end{figure}

The circuit is based on the argument used by Mayers and Yao in their original EPR test.  It may be seen as a type of SWAP gate, decomposed into three CNOT gates.  Here the first CNOT gate is omitted since the ancilla is always initialized in the state $\ket{0}$.  The Hadamards and Controlled $Z$ operation replace a CNOT targeted on the ancilla.  With these points in mind, we see that when $Z^{\prime}_{v}$ and $X^{\prime}_{v}$ are indeed qubit Pauli operators the circuit defines a SWAP operation.

We will now calculate the result of $\Phi$ applied to $\ket{\psi^{\prime}}$.
\begin{equation}
\Phi(\ket{\psi^{\prime}}) = \frac{1}{2^{n}}\sum_{x} \bigotimes_{vÊ\in V} X^{\prime x_{v}}_{v}\left(I + (-1)^{x_{v}}Z^{\prime}_{v}\right) \ket{\psi^{\prime}}\ket{x}
\end{equation}
with $x= (x_{v})_{v \in V} \in \{0,1\}^{|V|}$.  Applying the anti-commutation relation, this simplifies to 
\begin{equation}\label{eq:isometryanticommute}
\Phi(\ket{\psi^{\prime}}) = \frac{1}{2^{n}}\sum_{x} \bigotimes_{vÊ\in V} \left(I + Z^{\prime}_{v}\right)X^{\prime x_{v}}_{v} \ket{\psi^{\prime}}\ket{x}.
\end{equation}
Using lemma~\ref{lemma:exchangexy} and the fact that $(I + Z^{\prime}_{v}) Z^{\prime}_{v} = I + Z^{\prime}_{v}$ we finally find
\begin{equation}\label{eq:isometryfinal}
\Phi(\ket{\psi^{\prime}}) = \left(\frac{1}{\sqrt{2^{n}}}\bigotimes_{vÊ\in V} \left(I + Z^{\prime}_{v}\ket{\psi^{\prime}}\right)\right) \left(\frac{1}{\sqrt{2^{n}}} \sum_{x} (-1)^{e(x)}\ket{x}\right)
\end{equation}
where $e(x)$ is the number of edges in the induced subgraph on the set $V_{x} = \{v \in V | x_{v} = 1\}$.  

Set $\ket{\phi} = \frac{1}{\sqrt{2^{n}}} \sum_{x} (-1)^{e(x)} \ket{x}$.  Consider $S_{v} \ket{x}$ for some $x$.  This will be $\pm \ket{x \oplus 1_{v}}$ where $1_{v}$ is the binary vector with 1 in position $v$ and 0 everywhere else.  The sign may be computed as follows:  for each $Z_{u}$ component of $S_{v}$, if $x_{u} = 1$ a $-1$ factor will be introduced.  This happens when $(u,v) \in E$ and $u$ is in $V_{x}$.  We may see this as either removing or adding the vertex $v$ and adding a $-1$ factor for each edge between $v$ and another vertex in $V_{x}$.  Thus $S_{v} (-1)^{e(x)} \ket{x} = (-1)^{e(x \oplus 1_{v})} \ket{x \oplus 1_{v}}$.  In other words, this exactly produces the correct sign on each $\ket{x}$ so that $S_{v} \ket{\phi} = \ket{\phi}$ and in fact $\ket{\phi}Ê= \ket{\psi}$.

Now consider $\Phi(X^{\prime}_{v} \ket{\psi^{\prime}})$ for some $v$.  After anti-commuting the $X^{\prime}$ operations we have
\begin{equation}\label{eq:isometryanticommute2}
\Phi(X^{\prime}_{u}\ket{\psi^{\prime}}) = \frac{1}{2^{n}}\sum_{x} \bigotimes_{vÊ\in V} \left(I + Z^{\prime}_{v}\right)X^{\prime x_{v}}_{v} X^{^{\prime}}_{u}\ket{\psi^{\prime}}\ket{x}.
\end{equation}
In this equation, we may simply replace $X^{\prime x_{v}}_{v} X^{^{\prime}}_{u}$ with $X^{\prime x_{v} \oplus 1_{u}}_{v}$, where $1_{u}$ is the vector with 0s everywhere, except position $u$.  After applying lemma~\ref{lemma:exchangexy} we arrive at
\begin{equation}\label{eq:isometryfinal2}
\Phi(X^{\prime}_{u}\ket{\psi^{\prime}}) = \left(\frac{1}{2^{n}}\bigotimes_{vÊ\in V} \left(I + Z^{\prime}_{v}\right)\ket{\psi^{\prime}}\right) \sum_{x} (-1)^{e(x \oplus 1_{u})}\ket{x}.
\end{equation}
A change of variable, $x \mapsto x \oplus 1_{u}$, and the fact that $X_{u}\ket{x} = \ket{x \oplus 1_{u}}$ gives the final result, 
\begin{equation}
\Phi(X^{\prime}_{v}\ket{\psi^{\prime}}) = \left(\frac{1}{\sqrt{2^{n}}}\bigotimes_{vÊ\in V} \left(I + Z^{\prime}_{v}\right)\ket{\psi^{\prime}}\right) X_{v}\ket{\psi}.
\end{equation}

A similar analysis shows that
\begin{equation}
\Phi(Z^{\prime}_{v}\ket{\psi^{\prime}}) = \left(\frac{1}{\sqrt{2^{n}}}\bigotimes_{vÊ\in V} \left(I + Z^{\prime}_{v}\right)\ket{\psi^{\prime}}\right) Z_{v}\ket{\psi}.
\end{equation}
Recall from the proof of lemma~\ref{lemma:dxyanticommute} that $D^{\prime}_{v}\ket{\psi^{\prime}}$ may be written as $D^{\prime}_{v}\ket{\psi^{\prime}} = \frac{1}{\sqrt{2}}\left(X^{\prime}_{v} + Z^{\prime}_{v}\right)\ket{\psi^{\prime}}$.  By linearity, then
\begin{equation}
\Phi(D^{\prime}_{v}\ket{\psi^{\prime}}) = \left(\frac{1}{\sqrt{2^{n}}}\bigotimes_{vÊ\in V} \left(I + Z^{\prime}_{v}\right)\ket{\psi^{\prime}}\right) D_{v}\ket{\psi}.
\end{equation}

This concludes the proof of theorem~\ref{theorem:graphtesting}.

\section{Robustness}

In this section we will show that the main theorems are both robust.

\subsection{Definitions and main theorem}
First, we modify the definitions of simulation and equivalence to allow for small deviations from the reference experiment

\begin{definition}\label{def:epsilonsimulates}
Let a physical experiment and a compatible reference experiment be given as above.   We say that the physical experiment \emph{$\epsilon$-simulates} the reference experiment if for each measurement setting $m = (m_{1}, \dots, m_{n})$, $m_{j} \in \{0, \dots, k_{j}\}$ we have
\begin{equation}
\left|\bra{\psi^{\prime}} \bigotimes_{j=1}^{n} M^{\prime}_{j, m_{j}} \ket{\psi^{\prime}} - \bra{\psi} \bigotimes_{j=1}^{n} M_{j, m_{j}} \ket{\psi} \right| \leq \epsilon.
\end{equation}
\end{definition}

\begin{definition}\label{def:deltaequivalence}
Let a physical experiment and a compatible reference experiment be given as above.  We say that the physical experiment is \emph{$\delta$-equivalent} to the reference experiment if there exists a local isometry
\begin{equation}
\Phi = \Phi_{1} \otimes \dots  \otimes \Phi_{n}
\end{equation}
and a state $\ket{junk}$ such that, for each $j$, and $m \in \{1, \dots, k_{j}\}$
\begin{eqnarray}
\norm{\Phi(\ket{\psi^{\prime}})  -  \ket{junk}\otimes \ket{\psi}}_{1} & \leq & \delta \\
\norm{\Phi(M^{\prime}_{j,m}\ket{\psi^{\prime}}) - \ket{junk} \otimes M_{j,m}\ket{\psi} }_{1} & \leq& \delta
\end{eqnarray}
where $\delta =  \frac{15 n^{2} + 5n}{2} \sqrt{\epsilon}$ ($\delta = ?$) and $\ket{junk}$ is in the same Hilbert space as $\ket{\psi^{\prime}}$.
\end{definition}

\begin{theorem}
Let a graph $G$ be given with $|V| = n$.  If a compatible physical experiment $\epsilon$-simulates reference experiment 1 (2) then it is $\delta$-equivalent to it with $\delta = \frac{n}{2}\left(5n^{2} + 11n + 4\right) \sqrt{\epsilon}$ ($\delta =  (2n^{3} + 4n^{2} + n) \sqrt{\epsilon} + 13(\frac{1}{2}n^{2} + n) \epsilon^{\frac{1}{4}}$)
\end{theorem}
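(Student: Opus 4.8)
The plan is to re-run the three-stage argument of Theorem~\ref{theorem:graphtesting} while replacing every exact operator identity by an approximate one and tracking the accumulated error in the vector $2$-norm. The starting point is the elementary conversion: if $M$ is an observable (so $M^{2} = I$) and $\bra{\psi'}M\ket{\psi'} \geq 1 - \epsilon$, then
\[
\norm{(M - I)\ket{\psi'}}^{2} = 2 - 2\bra{\psi'}M\ket{\psi'} \leq 2\epsilon,
\]
so $\norm{(M-I)\ket{\psi'}} \leq \sqrt{2\epsilon}$. Applying this to each $S'_{v}$, and the analogous estimates to the $D'$-measurements, turns the exact relations $S'_{v}\ket{\psi'} = \ket{\psi'}$ used throughout Section~2 into the statements $\norm{(S'_{v} - I)\ket{\psi'}} \leq \sqrt{2\epsilon}$. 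The basic bookkeeping tool is that for unitaries $A, B$ one has $\norm{(AB - I)\ket{\psi'}} \leq \norm{(A-I)\ket{\psi'}} + \norm{(B - I)\ket{\psi'}}$ (insert $A$ and use that $B$ is norm-preserving), so errors add when operators are composed.

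Next I would prove robust versions of the anti-commutation lemmas. In Lemma~\ref{lemma:chaining} each of the four insertions of $S'$ into $(Z'X')_{u}\ket{\psi'}$ now costs $\sqrt{2\epsilon}$, and the single use of the anti-commutation hypothesis on $v$ transfers its error; writing $\eta_{w} = \norm{(X'Z' + Z'X')_{w}\ket{\psi'}}$ one obtains a recurrence of the form $\eta_{u} \leq \eta_{v} + c\sqrt{\epsilon}$. Propagating this along a path in the connected graph $G$ from the seed vertex (of length at most $n$) gives $\eta_{v} = O(n\sqrt{\epsilon})$ for every vertex $v$. The seed estimate comes from the robust form of Lemma~\ref{lemma:oddcycleanticommute} (experiment~1), where the product runs over the cycle and contributes error proportional to its length, or from Lemma~\ref{lemma:dxyanticommute} (experiment~2). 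The latter is the delicate case: the decomposition of $D'_{u}\ket{\psi'}$ into the near-orthogonal components $X'_{u}\ket{\psi'}$ and $Z'_{u}\ket{\psi'}$ is only approximate, and recovering $\eta_{u}$ from the approximate identity $\ket{\psi'} \approx \tfrac{1}{2}(2I + (X'Z')_{u} + (Z'X')_{u})\ket{\psi'}$ requires taking a square root of the error, which is the source of the coarser $\epsilon^{1/4}$ terms in the experiment-2 bound.

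Finally I would robustify Lemma~\ref{lemma:exchangexy} and the isometry computation. In the exchange lemma the rearrangement of $\prod_{v} S'_{v}\ket{\psi'}$ uses the per-vertex anti-commutation once for each $Z'$ above the diagonal, i.e.\ once per edge, so its error is bounded by the sum over $V'$ of the $\sqrt{2\epsilon}$ insertion costs plus $|E'|$ times the per-vertex $\eta$, giving $O(n^{2}\sqrt{\epsilon})$ on $n$-vertex subgraphs. The isometry calculation then mirrors equations~(\ref{eq:isometryanticommute})--(\ref{eq:isometryfinal}) exactly, but each replacement of $X'Z'$ by $-Z'X'$ (there are $n$ tensor factors) and each appeal to the exchange lemma now contributes accumulated error; the normalization $1/\sqrt{2^{n}}$ keeps the sum over $x \in \{0,1\}^{n}$ under control, since the norm of a normalized superposition is bounded by the worst per-branch error via the triangle inequality. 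Collecting the contributions from the $n$ isometry factors, each carrying an $O(n^{2}\sqrt{\epsilon})$ exchange-lemma error, produces the claimed $\delta = O(n^{3}\sqrt{\epsilon})$, and the same count applied to $\Phi(X'_{v}\ket{\psi'})$, $\Phi(Z'_{v}\ket{\psi'})$ and $\Phi(D'_{v}\ket{\psi'})$ gives the measurement estimates.

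The main obstacle is the bookkeeping that yields the explicit polynomial in $n$: I expect the dominant cost to come from the fact that the seed anti-commutation must be transported to every vertex along a path of worst-case length $n$ (the factor $n$ per vertex), after which the exchange lemma sums one such error per edge and the isometry sums over the $n$ vertices, compounding to $n^{3}$. The second subtlety, confined to experiment~2, is the non-linear normalization step in Lemma~\ref{lemma:dxyanticommute}; faithfully propagating the square-root loss there is what forces the extra $\epsilon^{1/4}$ contribution, and keeping its coefficient linear rather than higher-order in $n$ requires care.
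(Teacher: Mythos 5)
Your proposal follows essentially the same route as the paper: the same conversion $\norm{(M - I)\ket{\psi^{\prime}}} \leq \sqrt{2\epsilon}$, additive error propagation through the chaining, odd-cycle and $D$-observable lemmas and through Lemma~\ref{lemma:exchangexy}, a per-branch triangle-inequality count over $x \in \{0,1\}^{n}$ in the isometry, and the correct identification of the square-root loss in the robust Lemma~\ref{lemma:dxyanticommute} as the source of the $\epsilon^{1/4}$ terms for experiment 2. (Your intermediate claim that the exchange lemma costs $O(n^{2}\sqrt{\epsilon})$ is inconsistent with your own bound $|E^{\prime}|\cdot\eta$, which is $O(n^{3}\sqrt{\epsilon})$ when $|E^{\prime}| = \Theta(n^{2})$ and $\eta = O(n\sqrt{\epsilon})$, but this only redistributes where the cubic factor arises and your final $\delta$ matches the paper's, which gets its dominant $n^{3}$ term from the path-length factor times $|E|$ rather than from $n$ isometry factors.)
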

%

\subsection{Proof for reference experiment 1}
First we note that if $\bra{\psi}M\ket{\psi} \geq 1 - \epsilon$ then
\begin{equation}
\norm{\ket{\psi} - M \ket{\psi}}_{1} \leq \sqrt{2\epsilon}.
\end{equation}
Next, suppose that we have $\norm{\ket{\psi} - M \ket{\psi}}_{1} \leq \alpha$ and $\norm{\ket{\psi} - N \ket{\psi}}_{1} \leq \beta$.  Using the triangle inequality and the fact that $\norm{M}_{\infty} = 1$ we have
\begin{equation}
\norm{\ket{\psi} - MN \ket{\psi}}_{1} \leq \alpha + \beta.
\end{equation}
The remainder of the proof will use these estimations repeatedly, along with the triangle inequality.  We need only count the number of times this happens, which is the same as the number of operators multiplied together.

First, for lemma~\ref{lemma:oddcycleanticommute} let $c$ be the size of the induced cycle.  We multiply $c+1$ operators together.  Thus we conclude that for a vertex $u$ in the induced cycle
\begin{equation}
\norm{(X^{\prime}Z^{\prime})_{u} \ket{\psi^{\prime}} + (Z^{\prime}X^{\prime})_{u} \ket{\psi^{\prime}}}_{1} \leq 2(c+1)\sqrt{\epsilon}.
\end{equation}
Next, for lemma~\ref{lemma:chaining} we multiply four operators, then invoke the anti-commuting property on one of the vertices.  This gives
\begin{equation}
\norm{(X^{\prime}Z^{\prime})_{u} \ket{\psi^{\prime}} + (Z^{\prime}X^{\prime})_{u} \ket{\psi^{\prime}}}_{1} \leq 8\sqrt{\epsilon} + \beta
\end{equation}
where $\beta$ is $\norm{(X^{\prime}Z^{\prime})_{v} \ket{\psi^{\prime}} + (Z^{\prime}X^{\prime})_{v} \ket{\psi^{\prime}}}_{1}$, $v$ being neighbouring vertex.  We may apply lemma~\ref{lemma:chaining} along paths from vertices in the induced cycle in $G$.  Let $l$ be the length (number of edges) of the longest path.  Then for any vertex $u$ we find, at worst,
\begin{equation}
\norm{(X^{\prime}Z^{\prime})_{u} \ket{\psi^{\prime}} + (Z^{\prime}X^{\prime})_{u} \ket{\psi^{\prime}}}_{1} \leq 2(4l + c + 1)\sqrt{\epsilon}.
\end{equation}

Lastly, for lemma~\ref{lemma:exchangexy}, we multiply $|V^{\prime}|$ operators, and apply the anti-commuting relation $|E^{\prime}|$ times.  Thus
\begin{equation}
\norm{(-1)^{|E^{\prime}|} X^{\prime V^{\prime}} \ket{\psi^{\prime}} - Z^{\prime N(V^{\prime})} \ket{\psi^{\prime}}}_{1} \leq 2\left(|V^{\prime}| +  (4l + c + 1)|E^{\prime}|\right)\sqrt{\epsilon}.
\end{equation}

We are now ready to analyze the proof of the main theorem for reference experiment 1.  To arrive at equation~\ref{eq:isometryanticommute} we apply the anti-commutation relation.  This happens once for each 1 appearing in $x$, for each possible $x$, for a total of $n 2^{n-1}$ times.  We may find this by pairing values $x$ and $x \oplus 111\dots1$.  There are $2^{n-1}$ such pairs and each pair contains $n$ 1s all together.  Multiplying by the normalization factor $\frac{1}{2^{n}}$ we find
\begin{equation}
\norm{\Phi(\ket{\psi^{\prime}}) - \frac{1}{2^{n}}\sum_{x} \bigotimes_{vÊ\in V} \left(I + Z^{\prime}_{v}\right)X^{\prime x_{v}}_{v} \ket{\psi^{\prime}}\ket{x}} \leq n(4l + c + 1)\sqrt{\epsilon}.
\end{equation}
For equation~\ref{eq:isometryfinal} we use lemma~\ref{lemma:exchangexy}, once for each possible value of $x$.  Again, the estimate depends on the number of 1s in $x$, summed over all possible $x$s.  As well, it depends on the number of edges in the induced subgraph.  An edge $(u,v)$ will be counted only when $x_{u} = x_{v} = 1$.  This occurs for $1/4$ of all $x$s.  Summed over all possible $x$s and edges, then, the number of times edges are counted is $2^{n-2}|E|$.  Again, we multiply by the normalization factor $\frac{1}{2^{n}}$.  This gives our final estimate:
\begin{equation}
\norm{\Phi(\ket{\psi^{\prime}}) - \left(\frac{1}{2^{n}}\bigotimes_{vÊ\in V} \left(I + Z^{\prime}_{v}\right)\ket{\psi^{\prime}}\right) \sum_{x} (-1)^{e(x)}\ket{x}}_{1} 
\end{equation}
\begin{equation}
\leq \left(n(4l + c + 1) + \left(n+  (4l + c + 1)|E|/2\right)\right)\sqrt{\epsilon}
\end{equation}
\begin{equation}
= \left((4l + c + 1)(n + \frac{|E|}{2}) + n \right) \sqrt{\epsilon}
\end{equation}
where $e(x)$ is the number of edges in the induced subgraph on the set $V_{x} = \{v \in V | x_{v} = 1\}$.  

Note that when calculating $\Phi\left(X^{\prime}_{u} \ket{\psi^{\prime}}\right)$ etc.\ we did not use any more estimations, we simply rearrange when lemma~\ref{lemma:exchangexy} is applied.  Thus the same robustness applies.

As a last estimation, we note that $l$ and $c$ cannot be larger than $n$, and $|E| \leq n^{2}$.  We may thus set $\delta  = \frac{n}{2}\left(5n^{2} + 11n + 4\right) \sqrt{\epsilon}$.

Note that we may make much better estimates if some properties of the graph are known.  For example, if every vertex lies in a triangle and the max degree is $6$, as in the case of a lattice of triangles, we may instead set $\delta = 17n \sqrt{\epsilon}$.

\subsection{Proof for reference experiment 2}
Much of the same analysis may be used for experiment 2.  Indeed, since the only difference in the proofs for the non-robust results is how the anti-commuting property is proved, we may simply replace the estimation for lemma~\ref{lemma:oddcycleanticommute} with that of lemma~\ref{lemma:dxyanticommute}.

We begin, then, with $\epsilon$-simulation and prove a robust version of lemma~\ref{lemma:dxyanticommute}.  First we wish to estimate $\alpha = \norm{D^{\prime}_{u} \ket{\psi} - \frac{X^{\prime}_{u} + Z^{\prime}_{u}}{\sqrt{2}}\ket{\psi}}_{1}$.  Using techniques from the previous section, we have
\begin{eqnarray}
\norm{X^{\prime}_{u} \ket{\psi^{\prime}} - Z^{\prime N_{u}} \ket{\psi}}_{1} & \leq & 2 \sqrt{\epsilon} \\
\norm{Z^{\prime}_{u} \ket{\psi^{\prime}} - X^{\prime}_{v} Z^{\prime N_{v} \setminus u}\ket{\psi^{\prime}}}_{1} & \leq & 2 \sqrt{\epsilon}. 
\end{eqnarray}
These along with the triangle inequality give an upper bound for $\alpha$ of
\begin{equation}
2\sqrt{2\epsilon} + \norm{D^{\prime}_{u} \ket{\psi} - \frac{Z^{\prime N_{u}} + X^{\prime}_{v} Z^{\prime N_{v} \setminus u}}{\sqrt{2}}\ket{\psi}}_{1} 
\end{equation}
Expanding the second term, we get 

\begin{equation}
\sqrt{1 + \norm{\frac{Z^{\prime N_{u}} +X^{\prime}_{v} Z^{\prime N_{v} \setminus u}}{\sqrt{2}} \ket{\psi^{\prime}}}_{1}^{2} - \sqrt{2}\left(\bra{\psi^{\prime}}D^{\prime}_{u}Z^{\prime N_{u}} \ket{\psi^{\prime}} + \bra{\psi^{\prime}}D^{\prime}_{u} X^{\prime}_{v}Z^{\prime N_{v} \setminus u} \ket{\psi^{\prime}}\right)}.
\end{equation}
Since $\norm{Z^{\prime}_{u} \ket{\psi^{\prime}} - X^{\prime}_{v} Z^{\prime N_{v} \setminus u}\ket{\psi^{\prime}}}_{1} \leq 2 \sqrt{\epsilon}$ and $\norm{Z^{\prime N_{u}}\ket{\psi^{\prime}}}_{1} = 1$ we find 
\begin{equation}
\left|\bra{\psi^{\prime}} Z^{\prime N_{u}}Z^{\prime}_{u} \ket{\psi^{\prime}} - \bra{\psi^{\prime}}Z^{\prime N_{u}}X^{\prime}_{v} Z^{\prime N_{v} \setminus u}\ket{\psi^{\prime}}\right| \leq 2 \sqrt{\epsilon}.
\end{equation}
By hypothesis, $\left|\bra{\psi^{\prime}} Z^{\prime N_{u}}Z^{\prime}_{u} \ket{\psi^{\prime}} \right| \leq \epsilon$, so $\left|\bra{\psi^{\prime}}Z^{\prime N_{u}}X^{\prime}_{v} Z^{\prime N_{v} \setminus u}\ket{\psi^{\prime}}\right| \leq 2 \sqrt{\epsilon} + \epsilon$.

Meanwhile 
$\beta^{2} = \norm{\frac{Z^{\prime N_{u}} +X^{\prime}_{v} Z^{\prime N_{v} \setminus u}}{\sqrt{2}} \ket{\psi^{\prime}}}_{1}^{2} = 1 + \text{Re} \bra{\psi^{\prime}} Z^{\prime N_{u}} X^{\prime}_{v} Z^{\prime N_{v} \setminus u} \ket{\psi^{\prime}}$, so $|1- \beta^{2}| \leq 2\sqrt{\epsilon} + \epsilon$.

Finally, by hypothesis $\left|\bra{\psi^{\prime}}D^{\prime}_{u}Z^{\prime N_{u}} \ket{\psi^{\prime}} + \bra{\psi^{\prime}}D^{\prime}_{u} X^{\prime}_{v}Z^{\prime N_{v} \setminus u} \ket{\psi^{\prime}} - \sqrt{2}\right| \leq 2\epsilon$.  Combining these facts we find 
$\alpha \leq 2\sqrt{2 \epsilon} + \sqrt{2 \sqrt{\epsilon} + (1 + 2\sqrt{2}) \epsilon}$.

Now we wish to estimate
\begin{equation}
\norm{(D_{u}^{\prime})^{2}\ket{\psi^{\prime}} - \frac{\left(X^{\prime}_{u} + Z^{\prime}_{u}\right)^{2}}{2} \ket{\psi^{\prime}}}_{1}
\end{equation}
By the fact $\norm{D^{\prime}_{u}}_{\infty} = 1$ we have $\norm{(D_{u}^{\prime})^{2}\ket{\psi^{\prime}} - D^{\prime}_{u}\frac{X^{\prime}_{u} +Z^{\prime}_{u}}{\sqrt{2}} \ket{\psi^{\prime}}}_{1} \leq \alpha$.  Similarly, since $\norm{X^{\prime}_{u} + Z^{\prime}_{u}}_{\infty} \leq 2$ we find $\norm{D_{u}^{\prime}\frac{X^{\prime}_{u} Z^{\prime}_{u}}{\sqrt{2}}\ket{\psi^{\prime}} - \frac{\left(X^{\prime}_{u} + Z^{\prime}_{u}\right)^{2}}{2} \ket{\psi^{\prime}}}_{1} \leq \sqrt{2} \alpha$.  Using these facts, the triangle inequality, and $(D^{\prime}_{u})^{2} = I$, we obtain

\begin{equation*}
2\norm{\ket{\psi^{\prime}} - \frac{\left(X^{\prime}_{u} + Z^{\prime}_{u}\right)^{2}}{2} \ket{\psi^{\prime}}}_{1} = \norm{X^{\prime}_{u} Z^{\prime}_{u}\ket{\psi^{\prime}} + Z^{\prime}_{u} X^{\prime}_{u} \ket{\psi^{\prime}}}_{1} 
\end{equation*}
\begin{equation}
\leq 2(1 + \sqrt{2})\left(2\sqrt{2 \epsilon} + \sqrt{2 \sqrt{\epsilon} + (1 + 2\sqrt{2}) \epsilon}\right) \leq 26 \epsilon^{\frac{1}{4}}
\end{equation}
with the last inequality valid for $\epsilon \leq 1$.

Using this estimate, and working through the estimations as in the previous section, we find that we may set 
\begin{equation}
\delta = (2l (2n + |E|) + n) \sqrt{\epsilon} + 13(n + \frac{1}{2}|E|) \epsilon^{\frac{1}{4}}.
\end{equation}
For a simpler expression, we may use $l \leq n$ and $|E| \leq n^{2}$, obtaining
\begin{equation}
\delta = (2n^{3} + 4n^{2} + n) \sqrt{\epsilon} + 13(\frac{1}{2}n^{2} + n) \epsilon^{\frac{1}{4}}.
\end{equation}

Again, we may find a better estimate with more information about the graph.  For cluster states, which have a square lattice graph, we have $|E| \leq 4n$.  We may also perform $D_{u}$ measurements on all vertices and set $l=0$.  In this case we may set $\delta = n \sqrt{\epsilon} + 39n\epsilon^{\frac{1}{4}}$.

\section{Discussion}

\subsection{Estimating expected values}
The main results concern expected values, rather than experimental outcomes.  So in order to make use of these results in any practical implementation we must estimate the expected values using data collected from experimental outcomes.  The obvious approach of sampling the devices many times and applying a Chernoff bound is problematic.  In particular, we do not wish to assume that separate uses of a device are independent and identically distributed since these assumptions would be untestable and likely false in many practical experiments.

One approach to this problem is that used by Pironio et al. in \cite{Pironio:2010:Random-numbers-}.  There the authors construct a martingale, which is a sequence of random variables with certain properties.  In particular, the random variables need not be independent.  This allows them to use Azuma's inequality, which gives good bounds for martingales on how far away a sample may lie from the expected value without relying on independence assumptions.  A similar approach is viable here and a preliminary analysis suggests that good bounds are achievable.

\subsection{Graph state computation}
Graph states are particularly interesting for their role in measurement based quantum computation (MBQC, \cite{Raussendorf:2001:A-One-Way-Quant}).  In this paradigm a graph state is measured, vertex by vertex, in particular bases.  Each measurement may be interpreted as performing a unitary on a logical qubit.  The composition of these unitaries performs a logical circuit on the logical qubits.

A natural question to ask is whether a self-tested graph state could be used for MBQC to allow for self-tested computation.  Unfortunately MBQC depends on measurements in the $X$-$Y$ plane and the measurements tested here are all in the $X$-$Z$ plane.  However, the techniques used in \cite{McKague:2010:Generalized-sel} could easily be adapted to allow testing of $X$-$Y$ plane measurements which would then allow self-tested MBQC.  In fact, in the exact case the techniques used in \cite{McKague:2010:Generalized-sel} can be used with minimal changes.  A preliminary analysis of robustness suggests that the errors scale similarly to that of lemma~\ref{lemma:dxyanticommute} here.

\bibliography{Global_Bibliography}

\appendix
\section{Classical hidden variable model for bipartite graph states with $X$ and $Z$ measurements}\label{sec:bipartiteneedsd}

Let $G$ be a bipartite graph and $\ket{\psi}$ the corresponding graph state.  We give a local hidden variable model that is consistent will all measurements which are tensor products of $X$ and $Z$ on this state.

We construct a local hidden variable model by randomly choosing a value $\pm 1$ for $Z^{\prime}_{v}$ for each $v$ in the graph.  We then set $X^{\prime}_{v}$ to be
\begin{equation}
X^{\prime}_{v} = \prod_{u \in N_{v}} Z^{\prime}_{u}.
\end{equation}

Now we show that this is consistent with all possible tensor product $X$ and $Z$ measurements on $\ket{\psi}$.  Let $M = X^{S}Z^{T}$, $S \cap T = \emptyset$ be such a measurement.  First, suppose that $\pm M$ can be written as a product of stabilizers of $\ket{\psi}$.  Using lemma~\ref{lemma:evendegreeoddedges} we have
\begin{equation}
M = X^{S} Z^{N(S)} =   (-1)^{|E(S)|} \prod_{x \in S} S_{v}.
\end{equation}
Note that, by assumption, $M$ has only $X$ and $Z$ factors, so each $v \in S$ must have an even number of neighbours in $S$.  Then the induced subgraph on $S$ is Eulerian and we can partition the edges of the subgraph into cycles with no common edges (see Diestel \cite{Diestel:2010:Graph-Theory} for a proof).  Suppose that $|E(S)|$ is odd.  Then there must be at least one odd cycle in this partition and then $S$ has an odd cycle and so does $G$.  Since $G$ is bipartite this must not be the case and in fact $|E(S)|$ is even.  Hence $M = \prod_{x \in S} S_{v}$ and $\bra{\psi}M\ket{\psi} = 1$.  By construction $M^{\prime} = X^{\prime S} Z^{\prime N(S)} = \prod_{v \in S} X^{\prime}_{v} Z^{\prime N_{v}} = 1$ and the expected value of $M^{\prime}$ matches that of $M$.

Now suppose that $M$ is not a product of stabilizers of $\ket{\psi}$.  Then $M$ must anti-commute with at least one stabilizer and hence $\bra{\psi}M\ket{\psi} =0$.  Meanwhile, by construction
\begin{equation}
M^{\prime} = X^{\prime S} Z^{\prime T} = Z^{\prime N(S)}Z^{\prime T.}.
\end{equation}
If $N(S) = T$ then $M$ is in fact a product of stabilizers.  This is not the case, so there is at least one $Z^{\prime}_{v}$ in the above equation which is not cancelled.  Since all the $Z^{\prime}_{v}$s are chosen randomly, the product of the $Z^{\prime}_{v}$s not cancelled will also be uniformly random.  Thus the expected value of $M^{\prime}$ is 0.

\end{document}